\newtheorem{theorem}{Theorem}
\newtheorem{corollary}{Corollary}
\newtheorem{definition}{Definition}
\newtheorem{example}{Example}
\newtheorem{lemma}{Lemma}
\newtheorem{remark}{Remark}
\newenvironment{proof}[1][Proof]{\emph{#1.} }{\  \hfill $\square $ \vspace{5 pt}}
\tikzset{myptr/.style={decoration={markings,mark=at position 1 with %
       {\arrow[scale=2,>=stealth]{>}}},postaction={decorate}}}
\DeclareFontFamily{T1}{calligra}{}
\DeclareFontShape{T1}{calligra}{m}{n}{<->s*[1.44]callig15}{}
\DeclareMathAlphabet\mathcalligra   {T1}{calligra} {m} {n}
\newcommand{\pablo}[1]{  \ifthenelse{\boolean{showcomments}}
{\textcolor{green!50!black}{(T: #1)}}{}}
\newcommand{\marcelo}[1]{\ifthenelse{\boolean{showcomments}}
{\textcolor{red}{(M: #1)}}{}}
\newcommand{\agustin}[1]{  \ifthenelse{\boolean{showcomments}}
{\textcolor{blue!50!black}{(T: #1)}}{}}
\begin{document}

\title{
Lattice operations for the pairwise stable set in many-to-many markets via re-equilibration dynamics\thanks{%
We thank Nadia Guiñazú, Jordi Massó, an anonymous referee, and specially William Thomson for their  comments. We acknowledge the financial support
from UNSL through grants 032016, 030120, and 030320, from Consejo Nacional
de Investigaciones Cient\'{\i}ficas y T\'{e}cnicas (CONICET) through grant
PIP 112-200801-00655, and from Agencia Nacional de Promoción Cient\'ifica y Tecnológica through grant PICT 2017-2355.  E-mails: \href{mailto:agustinbonifacio@gmail.com}{\texttt{agustinbonifacio@gmail.com}} (A. G. Bonifacio),
\href{mailto:noemjuarez@gmail.com}{\texttt{noemjuarez@gmail.com}} (N. Juarez),
\href{mailto:pbmanasero@email.unsl.edu.ar}{\texttt{pbmanasero@email.unsl.edu.ar}} (P. B. Manasero).}}

\author[1,2]{Agustín G. Bonifacio  
\orcidlink{0000-0003-2239-8673}}
\author[1]{Noelia Juarez \orcidlink{0000-0002-1610-2031}}
\author[1]{ Paola B. Manasero \orcidlink{0000-0003-0384-499X}}
\affil[1]{Instituto de Matem\'{a}tica Aplicada San Luis, 
Universidad Nacional de San Luis and CONICET, San Luis, Argentina.}
\affil[2]{GATE Lyon--St-Étienne, Université Jean Monnet, Saint-\'Etienne, France.}

\date{
}

\maketitle

\begin{abstract}
We compute the lattice operations for the (pairwise) stable set in many-to-many matching markets when only path-independence on agents' choice functions is imposed.  
To do this, we first show that the sets of firm-quasi-stable and worker-quasi-stable many-to-many matchings form lattices.
Then, we construct Tarski operators on these lattices whose fixed points coincide with the set of stable matchings, and show that iterating these operators from suitable quasi-stable matchings yields the lattice operations in the stable set.
These operators resemble lay-off and vacancy chain dynamics, respectively. 

\bigskip

\noindent \emph{JEL classification:} C78, D47.

\medskip

\noindent \emph{Keywords:} many-to-many matching, (pairwise) stability, worker-quasi-stability, firm-quasi-stability, Tarski operator, lattice operations, re-equilibration.

\end{abstract}

\section{Introduction}

In many-to-many matching markets there are two sides, typically workers and firms, and the main solution concept is that of a ``pairwise stable'' allocation. An allocation is pairwise stable if no worker and firm who are not matched to each other would both prefer to be matched together, adjusting their current sets of partners.\footnote{In many-to-many matching there are several notions of stability. Here, we only study \emph{pairwise stability}. For a comprehensive presentation of stability and other core-related notions in the many-to-many setting, see \cite{echenique2004theory}.} The lattice structure of the set of pairwise stable allocations---hereafter simply \emph{stable} allocations---is a fundamental tool in this 
theory: it underlies results on conflicts of interest between sides, alignment of interests within sides,  and is helpful---among other things---to obtain the full set of stable matchings \citep{bonifacio2022cycles}. In this paper, we compute the lattice operations in many-to-many markets under the sole requirement that agents’ choice functions be path-independent. This axiom, originally due to \cite{plott1973path}, ensures that the outcome of choice is invariant to the order in which alternatives are considered.

Since \cite{aizerman1981general} it is well known that path-independence is equivalent to substitutability, once the mild condition of consistency is invoked as well.\footnote{In the context of matching with contracts,  consistency is called ``irrelevance of rejected alternatives''\citep[see][]{aygun2013matching}.}
Substitutability, introduced in the matching literature by \cite{kelso1982job}, states that agents are still chosen when the set of alternatives shrinks and they are still available, so no complementarities among agents prevail.  
Hence, while substitutability has traditionally been imposed directly on preferences 
to ensure the existence of stable matchings, the path-independence axiom captures exactly the same restriction at the level of choice behavior. Moreover, taking choice rules instead of preferences as primitives of matching markets has many advantages \citep[see][for a detailed discussion]{chambers2017choice}.

In the one-to-one model with strict preferences, it is well known that the set of stable matchings forms a lattice under the unanimous order for workers. Given two stable matchings, a new matching is obtained by assigning to each worker her preferred partner among the two. Conway’s insight, reproduced in \cite{knuth1976marriages} and formalized in \cite{roth1992two}, is that this pointwise assignment is itself stable and coincides with the least stable matching that dominates both (i.e., their join).  
The greatest lower bound (i.e., the meet) is constructed by assigning to each worker the worst partner among the two. A dual exercise can be performed to compute the lattice operations under the unanimous order for firms.

In the many-to-one model with path-independent choice functions, the simple constructions of the one-to-one case no longer hold. A coarser partial order---Blair's order---is needed to compare matchings and obtain the lattice structure. Moreover, either more stringent conditions on the model are necessary or a much more cumbersome theoretical structure is to be imposed in order to compute the lattice operations. Following the first approach,  the join and meet between stable matchings have been obtained invoking, in addition to path-independence, the property of  ``separability with quota'' \citep{marti2001lattice}. Following the second approach, \cite{echenique2004core} compute the lattice operations at the expense of performing their analysis in the realm of pre-matchings, entities introduced by \cite{adachi2000characterization} whose economic interpretation is difficult to grasp. They define a Tarski operator within the set of pre-matchings and show that the formulas for the join and meet obtained by \cite{marti2001lattice} define only pre-matchings when path-independence alone is required, but the fixed points obtained by starting their operator in such pre-matchings deliver the desired join and meet stable matchings.\footnote{We frame our analysis in terms of path-independent choice functions, while the cited papers take preferences as primitives and therefore refer to substitutability rather than path-independence.}

Here, inspired by \cite{echenique2004core}'s approach, we tackle the many-to-many setting from a different perspective. Instead of working with pre-matchings, we propose to work with quasi-stable matchings. A quasi-stable matching may contain blocking pairs, but only of a kind that does not disrupt existing relationships on one side of the market. Therefore, such matchings arise in two variants: firm-quasi-stable and worker-quasi-stable. 

In many-to-one scenarios, these sets of quasi-stable matchings admit a lattice structure. \cite{wu2018lattice} study firm-quasi-stable matchings\footnote{Actually, \cite{wu2018lattice} study the slightly stronger notion of ``worker-envy-free'' matching. Firm-quasi-stable matchings in many-to-one models generalize the one-to-one notion 
of ``simple'' matchings \citep{sotomayor1996non}  and have been studied by \cite{cantala2004restabilizing,cantala2011agreement} among others.} in  markets with responsive preferences (a much more restrictive requirement than path-independent choice functions) and show 
that firm-quasi-stable matchings form a lattice with respect to the unanimous order for workers, while \cite{bonifacio2022lattice} study worker-quasi-stable matchings under path-independence and show their lattice structure with respect to Blair's order for firms.  Such structures enable the definition of Tarski operators over these matchings. The fixed points of these operators correspond to fully stable matchings, providing a systematic method to analyze how a market can transition from quasi-stability to stability through re-equilibration processes.

While many-to-one models with path-independent choice functions are well understood, extending these results to many-to-many settings is nontrivial. A first challenge is defining analogues of firm-quasi-stability and worker-quasi-stability, 
that permit blocking only when no current matches are severed. A second difficulty concerns lattice operations. Unlike in the one-to-one setting, selecting each firm’s preferred subset from the union of its partners across two stable matchings does not generally preserve stability under path-independence alone, something that \cite{li2013note} already noted in the many-to-one setting. 
Something similar can be said from the workers' perspective.  
Finally, although \cite{blair1988lattice}  establishes the lattice structure of stable matchings in many-to-many markets under path-independence, his method lacks a transparent economic interpretation.

We address these issues jointly by introducing natural notions of firm-quasi-stability and worker-quasi-stability based on preserving existing relationships, showing that the resulting sets form lattices (with respect to their corresponding Blair's orders). Then, based on the insight of \cite{echenique2004core} but using quasi-stable matchings instead of pre-matchings, we construct Tarski operators whose fixed points coincide with stable matchings.

To compute the join between two stable matchings with respect to Blair's order for firms, we proceed as follows. 
Given two stable matchings, we construct a new matching by assigning to each firm the subset selected by its choice function from the union of its partners under the two matchings. 
However, this new matching need not be stable.  
We show that only its worker-quasi-stability can be ensured. Iteratively applying our Tarski operator to this worker-quasi-stable matching yields a fixed point that coincides with the join (in the stable set)  of the two original stable matchings with respect to Blair's order for firms. A symmetric argument can be used to compute the join between two stable matchings with respect to Blair's order for workers.

Once the join of two stable matchings has been computed with respect to one 
side of the market, the duality between Blair's orders for firms and workers
under path-indepen\-dence \citep{blair1988lattice} implies that this join coincides with the 
meet with respect to the other side. Hence, both lattice operations are obtained 
for both sides of the market.

Tarski operators in matching models admit a natural interpretation as re-equilibration processes in decentralized markets. This perspective is rooted in the vacancy-chain dynamics of \cite{blum1997vacancy}, where sequences of job reallocations driven by unfilled positions restore stability in labor markets. In the present framework and related work, this idea is formalized through an isotone operator whose fixed points coincide with stable matchings. Economically, each iteration corresponds to a round of decentralized adjustments: in worker-quasi-stable environments, the process takes the form of layoff chains, where unemployed workers apply to preferred firms, firms update their workforce, and the resulting dismissals propagate further adjustments; dually, in firm-quasi-stable environments, the presence of unfilled positions gives rise to vacancy-chain–like dynamics, in which reallocations are driven by the sequential filling of vacancies. These processes generate monotone paths in the lattice of quasi-stable matchings and converge to stability, providing a unified and operational interpretation of equilibrium as the endpoint of endogenous market correction.

    \cite{blair1988lattice} provides a method for computing lattice operations without reference  to quasi-stability or Tarski operators. We show that Blair's insight is in fact equivalent to our fixed-point method applied to the lattices of many-to-many quasi-stable matchings.\footnote{We shed light on this in Example \ref{example Blair}.} Thus, our approach provides a sound economic foundation for Blair's original construct.

    The remainder of the paper is organized as follows. In Section \ref{sect preliminares}, we present our model and some preliminaries. Section \ref{section quasi y Tarski} is devoted to presenting the concepts of worker-quasi-stability and firm-quasi-stability, and their respective Tarski operators. Section \ref{section lattice operations} contains our main result: the computation of the lattice operations for the stable set.  Finally, some conclusions are gathered in Section \ref{sect conclusions}.

\section{Preliminaries}\label{sect preliminares}

A \textbf{(matching) market} consists of two disjoint sets, the set of firms $F$ and the set of workers $W$. Let $A=F \cup W$ and, for each $i \in A$, let $$A_i=\begin{cases}
F \text{ \ if  \ } i \in W, \\ W \text{  \ if  \ }i \in F.    
\end{cases}$$
Each agent $i \in A$ has a choice function $C_i: 2^{A_i} \longrightarrow 2^{A_i}$ that satisfies \textbf{path-independence}, which says that
\begin{equation}\label{propiedad de choice}
C_i\left(S\cup S'\right)=C_i\left(C_i\left(S\right)\cup S'\right)
\end{equation} for each pair of subsets $S$ and $S'$ of $A_i$.

Under a regularity condition called ``consistency'',\footnote{\textbf{Consistency:} $C_i(S')=C_i(S)$ whenever $C_i(S)\subseteq S' \subseteq S \subseteq A_i.$} path-independency is equivalent to \textbf{substitutability}, that requires for each pair of subsets $S$ and $S'$ of $A_i$ that $s \in S' \subseteq S$ together with $s \in C_i(S)$ imply $s \in C_i(S')$.\footnote{Actually, a choice function is path-independent if and only if it is substitutable and satisfies consistency.}

A  market is denoted by $\boldsymbol{(C_F,C_W)}$, where $C_F$ is the profile of choice functions for all firms and $C_W$ is the profile of choice functions for all workers. A matching associates firms and workers. Formally,

\begin{definition}\label{def matching}
A \textbf{matching} is a function $\mu: A \longrightarrow 2^{A}$ such that, for each $w\in W$ and  each $f\in F$,
\begin{enumerate}[(i)]
\item $\mu(w)\subseteq F$, 
\item $\mu(f)\subseteq W$, and 
\item $w\in \mu(f)$ if and only if $f \in \mu(w)$. 
\end{enumerate}
\end{definition}
Let $\boldsymbol{\mathcal{M}}$ denote the set of all matchings for market $(C_F, C_W)$.

Let $i \in A$ and $\mu \in \mathcal{M}$. Agent $i$ is \textbf{matched} if $\mu(i) \neq \emptyset$, otherwise  $i$ is \textbf{unmatched}. Matching $\mu$ is \textbf{blocked by agent $\boldsymbol{i}$} if $\mu(i)\neq C_i(\mu(i))$. Matching $\mu$ is \textbf{individually rational} if it is not blocked by any agent. Matching $\mu$ is \textbf{blocked by a firm-worker pair $\boldsymbol{(f,w)}$} if (i) $f \notin \mu(w)$, (ii) $w \in C_f(\mu(f)\cup \{w\})$, and (iii) $f \in C_w(\mu(w)\cup \{f\})$. Matching $\mu$ is \textbf{(pairwise) stable} if it is individually rational and it is not blocked by any firm-worker pair. Let $\boldsymbol{\mathcal{S}}$ denote the set of all stable matchings of market $(C_F,C_W)$.

Within the set of all matchings, we can define a partial order from each side of the market as follows.\footnote{Given a set $\mathcal{X}$, a \emph{partial order} $\geq$ over $\mathcal{X}$ is a reflexive, antisymmetric, and transitive binary relation. If this is the case, sometimes we refer to the \emph{partially ordered set} $(\mathcal{X}, \geq)$.} Let $\mu, \mu'\in \mathcal{M}$. We say that \textbf{$\boldsymbol{\mu$ is (Blair) preferred to $\mu'}$ by the firms}, and write $\boldsymbol{\mu \succeq_F \mu'}$, if $$C_f(\mu(f)\cup \mu'(f))=\mu(f) \text{ for each }f \in F.$$ Similarly, we say that \textbf{$\boldsymbol{\mu$ is (Blair) preferred to $\mu'}$ by the workers}, and write $\boldsymbol{\mu \succeq_W \mu'}$, if $$C_w(\mu(w)\cup \mu'(w))=\mu(w)\text{ for each }w \in W.$$
An important fact about the set of stable matchings is that it is a lattice with respect to both partial orders $\succeq_F$ and $\succeq_W$.\footnote{Given a partially ordered set $(\mathcal{X},\geq)$, and two elements $x,y \in \mathcal{X}$, an element $z \in \mathcal{X}$ is an \emph{upper bound of $x$ and $y$} if $z \geq x$ and $z \geq y$. An element $ x  \vee  y \in \mathcal{X}$ is the \emph{join} (or \emph{supremum}) \emph{of $x$ and $y$} if and only if (i) $x \vee  y$ is an upper bound of $x$ and $y$, and (ii) $t \geq x  \vee  y$ for each upper bound $t$ of $x$ and $y$. The definitions of \emph{lower bound} and \emph{meet} (or \emph{infimum}) \emph{of $x$ and $y$}, denoted $x  \wedge  y$, are dual and we omit them. Furthermore, $(\mathcal{X},\geq)$ is a \emph{lattice} if $x  \vee  y$ and $x  \wedge  y$ exist for each pair $x,y \in \mathcal{X}$.} Moreover, $\succeq_F$ and $\succeq_W$ are \emph{dual} orders within $\mathcal{S}$, meaning that if $\mu, \mu' \in \mathcal{S}$ then $\mu \succeq_F \mu'$ if and only if $\mu' \succeq_W \mu$ \citep[see][]{blair1988lattice}. 

In order to compute the lattice operations for such lattices, we will consider two enlargements of the set of stable matchings: the set of worker-quasi-stable matchings and the set of firm-quasi-stable matchings. In each one, a Tarski operator is to be used to compute the join for the partial order that endows it with a lattice structure.

\section{Tarski operators for quasi-stable matchings}\label{section quasi y Tarski} 

Quasi-stable matchings are a relaxation of the classical notion of stability in matching markets. Unlike fully stable matchings, quasi-stable matchings may contain certain blocking pairs, but only those that do not disrupt existing assignments on a protected side of the market. 

In this section, we extend the notions of quasi-stability to many-to-many markets, show that the sets of worker-quasi-stable and firm-quasi-stable matchings admit lattice structures under path-independence, and define and analyze Tarski operators on these lattices.

For the notion of worker-quasi-stability, 
blocking pairs are allowed as long as they do not compromise the already existing relations for the workers in the matching. For a matching $\mu \in \mathcal{M}$ and a worker $w \in W$, consider the set of firms that currently employ $w$ or that would be willing to add $w$ to their workforce. Formally, define
$$
F_w^\mu=\{f \in F : w \in C_f(\mu(f)\cup\{w\})\}.$$
\vspace{-20 pt}
\begin{definition}\label{def worker-quasi many-to-many}
    Matching $\mu$ is \textbf{worker-quasi-stable} if it is individually rational and, for each $w \in W$ and each $T \subseteq F_w^\mu$, we have $$\mu(w) \subseteq C_w(\mu(w)\cup T).$$ 
\end{definition}
Denote by $\boldsymbol{\mathcal{Q}^W}$ to the set of all worker-quasi-stable matchings for market $(C_F,C_W)$. Notice that, since the empty matching (where every agent is unmatched) belongs to this set, $\mathcal{Q}^W\neq \emptyset$. 

\begin{remark} \em
    The symmetric definition of \emph{firm-quasi-stability} can be obtained by interchanging the roles of sets $W$ and $F$ in Definition \ref{def worker-quasi many-to-many}. The set of all firm-quasi-stable matchings is denoted by $\boldsymbol{\mathcal{Q}^F}$. \hfill $\triangle$
\end{remark}


In a many-to-one model with substitutable and $q$-separable preferences,\footnote{The combination of substitutability and $q$-separability is stronger than path-independence.} given two stable matchings \cite{marti2001lattice} define a new matching that assigns to each firm its most-preferred subset of workers among those assigned to it in the two original matchings. Such new matching is also stable in that setting, and yields the desired lattice structure. The generalization of that construction to our many-to-many environment, though not always stable, will prove useful. We present it next. 

Given $\mu,\mu' \in \mathcal{Q}^W,$ we define matching $\boldsymbol{\lambda_{\mu,\mu'}}$ as follows:
\begin{enumerate}[(i)]
\item for each $f\in F,$ $\lambda_{\mu,\mu'} (f)=C_f\left(\mu (f)\cup \mu ^{\prime }(f)\right),$
\item for each $w \in W,$ $\lambda_{\mu,\mu'} (w)=\{f\in F : w\in \lambda_{\mu,\mu'}(f)\}.$
\end{enumerate}

In \cite{bonifacio2022lattice} it is shown, in a many-to-one setting, that matching $\lambda_{\mu,\mu'}$ is well-defined, i.e., it is a worker-quasi-stable matching. Moreover, it is the join between $\mu$ and $\mu'$ with respect to $\succeq_F$ within the worker-quasi-stable set, implying that  this set is a join-semilattice\footnote{A partially order set $(\mathcal{X}, \geq)$ is a \emph{join-semilattice} if $x \ \vee \ y$ exists for each pair $x,y \in \mathcal{X}$.} and, moreover, a lattice. Next, we generalize these facts to our many-to-many model.

\begin{theorem}\label{prop facts worker-quasi many} (Facts about worker-quasi-stable matchings)
    \begin{enumerate}[(i)]
        \item Let $\mu, \mu' \in \mathcal{Q}^W$. Then, $\lambda_{\mu, \mu'} \in \mathcal{Q}^W$. Furthermore, $\lambda_{\mu, \mu'}$ is the join of $\mu$ and $\mu'$ with respect to $\succeq_F$ in $\mathcal{Q}^W$.
        \item $(\mathcal{Q}^W, \succeq_F)$ is a lattice. 
    \end{enumerate}
\end{theorem}
\begin{proof}
    See Subsection \ref{proof facts worker-quasi many} in the Appendix. 
\end{proof}

\begin{remark}\label{facts firm-quasi} \em
    The symmetric version of the previous result states that if, given two firm-quasi-stable matchings $\mu$ and $\mu'$, we define 
    matching $\boldsymbol{\gamma_{\mu,\mu'}}$ as follows:
\begin{enumerate}[(i)]
\item for each $w\in W,$ $\gamma_{\mu,\mu'} (w)=C_w\left(\mu (w)\cup \mu ^{\prime }(w)\right),$
\item for each $f \in F,$ $\gamma_{\mu,\mu'} (f)=\{w\in W : f \in \gamma_{\mu,\mu'}(w)\},$
\end{enumerate} then matching $\gamma_{\mu,\mu'}$ is firm-quasi-stable and, moreover, it is the join between $\mu$ and $\mu'$ with respect to $\succeq_W$. It also states that $(\mathcal{Q}^F, \succeq_W)$ is a lattice. \hfill $\triangle$
\end{remark}

Since market $(C_F,C_W)$ is symmetric in $F$ and $W$, in what follows we construct a many-to-many Tarski operator for worker-quasi-stable matchings. Of course, the dual construction for firm-quasi-stable matchings is straightforward.

Given $\mu \in \mathcal{Q}^W$ and $f \in F$, let 
\begin{equation}\label{b tilde de f}
    B_f^\mu=\mu(f) \cup \left\{w \in W : f \in C_w(F_w^\mu)\right\}.
\end{equation}
Set $B_f^\mu$ consists of all workers who are either matched with firm $f$ under $\mu$, or who, together with $f$, form a blocking pair of $\mu$.

We are now in a position to define the many-to-many version of the operator for matchings in $\mathcal{Q}^W$.

\begin{definition}{(Tarski operator for worker-quasi-stable matchings)}\label{def operator worker many-to-many} For each $\mu \in \mathcal{Q}^W$, operator $\mathcal{T}^F:\mathcal{Q}^W \longrightarrow \mathcal{Q}^W$ assigns 
 \begin{enumerate}[(i)]
    \item for each $f \in F$, $\mathcal{T}^F[\mu](f)=C_f(B_f^\mu)$
    \item for each $w \in W$, $\mathcal{T}^F[\mu](w)=\{f \in F : w \in \mathcal{T}^F[\mu](f)\}.$
\end{enumerate}   
\end{definition}

 \begin{remark} \em
     The dual definition of \emph{Tarski operator for firm-quasi-stable matchings}, $\mathcal{T}^W$, can be obtained by interchanging the roles of sets $W$ and $F$ in Definition \ref{def operator worker many-to-many}. 
    It is easy to see that, in a many-to-one model where firms have responsive preferences instead of path-independent choice functions, operator $\mathcal{T}^W$ specializes to operator $T$ defined in \cite{wu2018lattice}. \hfill $\triangle$
 \end{remark}

 Operator $\mathcal{T}^F$ is (i) well-defined and Pareto-improving for the firms, (ii) isotone,  and (iii) has as its set of fixed points the stable set. Formally,

\begin{theorem}\label{results operator worker many-to-many}
    For operator $\mathcal{T}^F: \mathcal{Q}^W \longrightarrow \mathcal{Q}^W$ we have:

    \begin{enumerate}[(i)]
        \item For each $\mu \in \mathcal{Q}^W$, $\mathcal{T}^F[\mu] \in \mathcal{Q}^W$ and $\mathcal{T}^F[\mu] \succeq_F \mu$.
        \item If $\mu, \mu' \in \mathcal{Q}^W$ and $\mu \succeq_F \mu'$, then $\mathcal{T}^F[\mu] \succeq_F \mathcal{T}^F[\mu']$. 
        \item $\mathcal{T}^F[\mu]=\mu$ if and only if $\mu \in \mathcal{S}$.
    \end{enumerate}
\end{theorem}
\begin{proof}
    See Subsection \ref{proof results worker many-to-many} in the Appendix. 
\end{proof}

\begin{remark} \em
Since $\mathcal{T}^F$ is isotone by Theorem \ref{results operator worker many-to-many} (ii), the set of its fixed points is a non-empty lattice with respect to $\succeq_F$ according to Tarski's Fixed Point Theorem. Since this set is exactly $\mathcal{S}$ by Theorem \ref{results operator worker many-to-many} (iii), as a byproduct, we obtain an alternative proof that $\mathcal{S}$ is non-empty and has a lattice structure. \hfill $\triangle$ 
\end{remark}

To analyze the re-equilibration dynamics implicit in our Tarski operators, we focus on the one defined on the set of worker-quasi-stable matchings and interpret it as a decentralized  process in a labor market with frictions. Take as initial condition a quasi-stable allocation arising after a shock---e.g., entry of new workers, firm downsizing, or a reorganization of teams---so that some profitable firm--worker deviations exist, but current worker-side relationships are temporarily protected. One iteration of the operator represents a round of local adjustments: workers who are currently unmatched but involved in blocking opportunities apply to their most preferred admissible firms; each firm then re-optimizes by selecting its preferred subset from the pool of incumbents and applicants, given its choice function; as a result, some previously employed workers may be displaced and become available for the next round. Because agents can maintain multiple relationships, these adjustments do not simply create or destroy matches, but reallocate workers across firms, and the induced displacements propagate as a layoff chain across the market. The operator thus maps the current configuration into the outcome of one such decentralized round, generating a monotone path in the lattice of worker-quasi-stable matchings. Fixed points are precisely those allocations where no further profitable local adjustments arise---i.e., stable matchings---so convergence captures the market's endogenous return to equilibrium after a shock \citep{bonifacio2022lattice}. A symmetric interpretation applies to the operator defined on firm-quasi-stable matchings, where the adjustment is driven by vacancy-chain dynamics.

\section{Lattice operations}\label{section lattice operations}

Given two stable matchings $\mu$ and $\mu'$, the natural candidates to be the join between them with respect to partial orders $\succeq_F$ and $\succeq_W$ in the stable set are $\lambda_{\mu,\mu'}$ and $\gamma_{\mu,\mu'}$, respectively. However, in general, $\lambda_{\mu,\mu'}$ and $\gamma_{\mu,\mu'}$ are only \emph{quasi-stable} matchings (see Theorem \ref{prop facts worker-quasi many} (i), Remark \ref{facts firm-quasi}, and Example \ref{example Li} next). The following result shows that $\lambda_{\mu, \mu'}$ can be re-equilibrated by applying iteratively operator $\mathcal{T}^F$ to obtain the join between $\mu$ and $\mu'$ with respect to $\succeq_F$ within $\mathcal{S}$. Similarly, $\gamma_{\mu, \mu'}$ can be re-equilibrated by applying iteratively operator $\mathcal{T}^W$ to obtain the join between $\mu$ and $\mu'$ with respect to $\succeq_W$ within $\mathcal{S}$. 

Given $\mu \in \mathcal{Q}^F$,  denote by $\boldsymbol{\mathcal{F}^W(\mu)}$ to the fixed point of $\mathcal{T}^W$ starting from $\mu$. Similarly, given $\mu \in \mathcal{Q}^W$,  denote by $\boldsymbol{\mathcal{F}^F(\mu)}$ to the fixed point of $\mathcal{T}^F$ starting from $\mu$. 

\begin{figure}[ht]
\begin{center}
\begin{tikzpicture}


\fill[blue!3] (-6,-3) -- (6,-3) -- (6,5) -- (-6,5) -- (-6,-3);










\draw[blue, thick] (0,0) ellipse (4cm and 2cm);
\draw[blue, thick] (0,1) ellipse (5cm and 3.5cm);





\draw[
blue] (0,3.5) to[out=10,in=95] (1,2.5);
\draw[
dashed, blue] (1,2.5) to[out=-80,in=80] (1,1.5);
\draw[
blue] (1,1.5) to[out=-100,in=-10] (0,0.5);


\draw[fill=blue] (-2,-0.75) circle[radius=1.5 pt] node[left]{\textcolor{blue}{$\mu$}};

\draw[fill=blue] (2,-0.75) circle[radius=1.5 pt] node[right]{\textcolor{blue}{$\mu'$}};

\draw[fill=blue] (0,3.5) circle[radius=1.5 pt] node[above]{\textcolor{blue}{$\lambda_{\mu,\mu'}$}};

\draw[fill=blue] (0,0.5) circle[radius=1.5 pt] node[above]{\textcolor{blue}{$\mu \curlyvee_F \mu'$}};

\draw[fill=blue] (1,2.5) circle[radius=1.5 pt] node[above right]{\textcolor{blue}{$\mathcal{T}^F(\lambda_{\mu,\mu'})$}};

\draw[fill=blue] (1,1.5) circle[radius=1.5 pt] node[below right]{\textcolor{blue}{$\mathcal{T}^{F^{(k)}}(\lambda_{\mu,\mu'})$}};

\draw[fill=blue] (-3.25,2) 
node[below]{\textcolor{blue}{\large{$\mathcal{S}$}}};

\draw[fill=blue] (-4.25,4) 
node[below]{\textcolor{blue}{\large{$\mathcal{Q}^W$}}};






\end{tikzpicture}
\caption[computing the join]{\label{figurajoin}\small{\textsf{\textbf{Computing the join w.r.t. $\boldsymbol{\succeq_F}$ between two stable matchings}. Given two stable matchings $\mu$ and $\mu'$, first we compute worker-quasi-stable matching $\lambda_{\mu,\mu'}$. Then, we re-equilibrate it by iteratively applying operator $\mathcal{T}^F$ until obtaining its stable fixed point  $\mu \curlyvee_F \mu'$. }}}
\end{center}
\end{figure}


\begin{theorem}\label{theorem join and meet many-to-many}
Let $\mu, \mu' \in \mathcal{S}$. Then, 
    \begin{enumerate}[(i)]
        \item $ \mu  \curlyvee_F 
  \mu'=\mathcal{F}^F(\lambda_{\mu, \mu'})$, and 
        \item $ \mu  \curlywedge_F  \mu'=\mathcal{F}^W(\gamma_{\mu, \mu'})$.
    \end{enumerate}
\end{theorem}
\begin{proof}
    See Subsection \ref{proof of theorem join and meet many-to-many} in the Appendix.
\end{proof}

\noindent Part (i) of the previous theorem is illustrated in Figure \ref{figurajoin}. Part (ii) of the previous theorem is implied by the duality between orders $\succeq_F$ and $\succeq_W$, that also implies the following corollary.


\begin{corollary}
    Let $\mu, \mu' \in \mathcal{S}$. Then, 
    \begin{enumerate}[(i)]
        \item $\mu  \curlyvee_W   \mu' =\mathcal{F}^W(\gamma_{\mu, \mu'})$, and 
        \item $ \mu  \curlywedge_W 
  \mu'=\mathcal{F}^F(\lambda_{\mu, \mu'})$.
    \end{enumerate}
\end{corollary}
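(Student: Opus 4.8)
The plan is to obtain the corollary as a direct consequence of Theorem \ref{theorem join and meet} together with the duality of $\succeq_F$ and $\geqslant_W$ on $\mathcal{S}$ established by \cite{blair1988lattice}. All the substantive work is already contained in Theorem \ref{theorem join and meet}; what remains is the elementary order-theoretic fact that passing to the dual order on a fixed ground set interchanges the roles of join and meet.

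First I would record the duality precisely: since $\succeq_F$ and $\geqslant_W$ are dual within $\mathcal{S}$, for all $\nu, \tau \in \mathcal{S}$ we have $\nu \geqslant_W \tau$ if and only if $\tau \succeq_F \nu$. Fix $\mu, \mu' \in \mathcal{S}$. From this equivalence, an element $z \in \mathcal{S}$ is an upper bound of $\{\mu, \mu'\}$ with respect to $\geqslant_W$ exactly when it is a lower bound of $\{\mu, \mu'\}$ with respect to $\succeq_F$, so the set of $\geqslant_W$-upper bounds and the set of $\succeq_F$-lower bounds coincide. Moreover, the $\geqslant_W$-least element of this common set is precisely its $\succeq_F$-greatest element. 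Hence $\mu \ \underline{\vee}_W \ \mu' = \mu \ \underline{\curlywedge}_F \ \mu'$ and, symmetrically, $\mu \ \underline{\wedge}_W \ \mu' = \mu \ \underline{\curlyvee}_F \ \mu'$, all four operations being taken in the stable lattice $\mathcal{S}$.

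Then I would substitute the expressions supplied by Theorem \ref{theorem join and meet}. Its part (ii) gives $\mu \ \underline{\curlywedge}_F \ \mu' = \mathcal{F}^W(\gamma_{\mu, \mu'})$, and combined with the first identity this yields statement (i) of the corollary. Its part (i) gives $\mu \ \underline{\curlyvee}_F \ \mu' = \mathcal{F}^F(\lambda_{\mu, \mu'})$, and combined with the second identity this yields statement (ii).

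I do not expect a genuine obstacle, as the argument reduces to a single application of duality once Theorem \ref{theorem join and meet} is in hand. The only point deserving care is to keep all joins and meets inside $\mathcal{S}$ rather than in the larger quasi-stable lattices $(\mathcal{Q}^W, \succeq_F)$ and $(\mathcal{Q}^F, \geqslant_W)$, so that Blair's duality, which is valid on $\mathcal{S}$, applies without modification; this is automatic, since the symbols $\underline{\vee}_W$, $\underline{\wedge}_W$, $\underline{\curlyvee}_F$, and $\underline{\curlywedge}_F$ all denote operations within the stable set.
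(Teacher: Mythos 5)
Your proposal is correct and matches the paper's own (implicit) argument: the corollary is obtained from Theorem \ref{theorem join and meet} by invoking Blair's duality of $\succeq_F$ and $\geqslant_W$ on $\mathcal{S}$, which interchanges join and meet. The only addition you make is to spell out the elementary order-theoretic step that upper bounds for $\geqslant_W$ are lower bounds for $\succeq_F$, which is exactly what the paper leaves to the reader.
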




In what follows, we present two illustrative examples that highlight different aspects 
of our results. 

The first one, taken directly from \cite{li2013note}, shows that $\lambda_{\mu,\mu'}$ is not always stable, contrary to the claim made by \cite{roth1985conflict}.\footnote{\cite{blair1988lattice} also realizes that $\lambda_{\mu,\mu'}$ need not be stable. See Footnote 11 in \cite{li2013note}.} Moreover, this example highlights that, even in the many-to-one model, stability may fail when no restriction other than path-independence is  imposed on preferences. 

\begin{example}\label{example Li}
Let $(C_F,P_W)$ be a many-to-one market with $F=\{f_1,f_2,f_3, f_4, f_5\}$ and $W=\{w_1, w_2,$ $w_3, w_4, w_5, w_6\}$.
Preferences of the agents are given in Table \ref{tabla ejemplo Li}. Agents' choice functions are derived from these preferences in the standard way.\footnote{For example, $C_{f_1}(W)=\{w_4\}$ and $C_{f_4}(\{w_1,w_2,w_3, w_4, w_6\})=\{w_4,w_6\}$.}

\begin{table}[h!]
\centering
\setlength{\tabcolsep}{6 pt} 
\renewcommand{\arraystretch}{2} 
\begin{tabular}{|c|c c c c c c:c| c c c c c c|}
\hline
$P_{f_1}$ & $\boxed{\overline{w_4}^\star}$ & $\underline{w_1}$ & ${w_5}^\dagger$ & $\cdots$  & \Circled{ $\emptyset$ } & $\cdots$ & $P_{w_1}$ & \Circled{ $\overline{f_2}^\dagger$ } & $\underline{f_1}$ & ${f_3}^\star$ & $\cdots$ & $\boxed{\emptyset}$ & $\cdots$ \\

$P_{f_2}$ & $\boxed{\underline{w_2}^\star}$ & \Circled{ $\overline{w_1,w_3}^\dagger$ } & $\cdots$ &  &  &  & $P_{w_2}$ & \Circled{ $\overline{f_3}^\dagger$ } & $\boxed{\underline{f_2}^\star}$ & $\cdots$ &  &  & \\

$P_{f_3}$ & ${w_1}^\star$ & $\boxed{\underline{w_3}}$ & \Circled{ $\overline{w_2}^\dagger$ } & $\cdots$ &  &  & $P_{w_3}$ & \Circled{ $\overline{f_2}^\dagger$ } & $\boxed{\underline{f_3}}$ & $\cdots$ & $\emptyset^\star$ & $\cdots$ & \\

$P_{f_4}$ & $\boxed{\overline{w_5}^\star}$ & \Circled{ $\underline{w_4,w_6}^\dagger$ } & $\cdots$ &  &  &  & $P_{w_4}$ & 
\Circled{ $\underline{f_4}^\dagger$ } & $\boxed{\overline{f_1}^\star}$ & $\cdots$ &  &  & \\

$P_{f_5}$ & $\boxed{\overline{w_6}^\star}$ & \Circled{ $\underline{w_5}$ } & $\cdots$ & $\emptyset^\dagger$ & $\cdots$ &  & $P_{w_5}$ & $f_1^\dagger$ & \Circled{ $\underline{f_5}$ } & $\boxed{\overline{f_4}^\star}$ & $\cdots$ &  & \\

          &  &  &  &  &  &  & $P_{w_6}$ & \Circled{ $\underline{f_4}^\dagger$ } & $\boxed{\overline{f_5}^\star}$ & $\cdots$ &  &  & \\

\hline 
\end{tabular}
\caption{Preference profile for Example \ref{example Li}.}
\label{tabla ejemplo Li}
\end{table}


\noindent Let  
$$\underline{\mu}=\begin{pmatrix}
    f_1 & f_2 & f_3 & f_4 & f_5 \\
    \{w_1\} & \{w_2\} & \{w_3\} & \{w_4,w_6\} & \{w_5\}\\
\end{pmatrix}$$
and 
$$\overline{\mu}=\begin{pmatrix}
    f_1 & f_2 & f_3 & f_4 & f_5 \\
    \{w_4\} & \{w_1,w_3\} & \{w_2\} & \{w_5\} & \{w_6\}\\
\end{pmatrix}.$$ Then,
$$\boxed{\mu}=\lambda_{\underline{\mu},\overline{\mu}}=\begin{pmatrix}
    f_1 & f_2 & f_3 & f_4 & f_5 & \emptyset\\
    \{w_4\} & \{w_2\} & \{w_3\} & \{w_5\} & \{w_6\} & \{w_1\}\\
\end{pmatrix}$$ and 
$$\text{\Circled{ $\mu$ }}=\gamma_{\underline{\mu},\overline{\mu}}=\begin{pmatrix}
    f_1 & f_2 & f_3 & f_4 & f_5 \\
    \emptyset & \{w_1,w_3\} & \{w_2\} & \{w_4,w_6\} & \{w_5\}\\
\end{pmatrix}.$$
Notice that both $\underline{\mu}$ and $\overline{\mu}$ are stable matchings.\footnote{Partners in matching $\underline{\mu}$ are depicted in Table 1 in the same manner. For example, to show that $f_4$ is matched to $w_4$ and $w_6$ under $\underline{\mu}$ we write $\underline{w_4,w_6}$ in the preference of $f_4$, and $\underline{f_4}$ in the preferences of $w_4$ and $w_6$. Partners in matchings $\overline{\mu}$, $\boxed{\mu}$, $\Circled{\mu}$, $\mu^\star$, and $\mu^\dagger$ are depicted similarly.}  We know, by \cite{bonifacio2022lattice} , that $\boxed{\mu}$ is the join of $\underline{\mu}$ and $\overline{\mu}$ with respect to $\succeq_F$ in the  worker-quasi-stable set. However, as \cite{li2013note} points out, $\boxed{\mu}$ is not stable, since $(f_3,w_1)$ blocks it. Similarly, \Circled{ $\mu$ } is the join of $\underline{\mu}$ and $\overline{\mu}$ with respect to $\succeq_W$ in the  firm-quasi-stable set by Remark \ref{facts firm-quasi}. However, \Circled{ $\mu$ } is not stable either, since $(f_1,w_5)$ blocks it. Applying the respective Tarski operator once, we get
$$
\mu^\star=\mathcal{T}^F\left[\boxed{\mu}\right]=\begin{pmatrix}
    f_1 & f_2 & f_3 & f_4 & f_5 & \emptyset\\
    \{w_4\} & \{w_2\} & \{w_1\} & \{w_5\} & \{w_6\} & \{w_3\}\\
\end{pmatrix}$$ and 
$$
\mu^\dagger=\mathcal{T}^W\left[\text{\Circled{ $\mu$ }}\right]=\begin{pmatrix}
    f_1 & f_2 & f_3 & f_4 & f_5 \\
    \{w_5\} & \{w_1,w_3\} & \{w_2\} & \{w_4,w_6\} & \emptyset\\
\end{pmatrix}.$$
It is readily seen that $\mu^\star$ is the firm-optimal matching, so it is stable. By Theorem \ref{theorem join and meet many-to-many} (i), it is $\mu  \curlyvee_F   \mu'$. Similarly, $\mu^\dagger$ is the worker-optimal matching, so it is stable and by Theorem \ref{theorem join and meet many-to-many} (ii), it is $\mu  \curlywedge_F   \mu'$. \hfill $\Diamond$
\end{example}

The second example \citep[Example 2 in][]{blair1988lattice} 
points out that,  the computation of the binary operations between two stable matchings is substantially complex. The example is also used to show that the lattice is not necessarily distributive. Here we revisit this example with the aim of showing that, by working with the approach of quasi-stable worker matchings and with the Tarski operator we have defined, the binary operations between stable matchings can be obtained much more easily.

\begin{table}
\centering
\vspace{-30 pt}
\setlength{\tabcolsep}{6 pt} 
\renewcommand{\arraystretch}{2} 
\begin{tabular}{|c|c c c c c c c c c c|}

\hline

 $P_{f_1}$ & \Circled{ ${w_1}^\star$ } & $\boxed{\overline{\underline{w_2,w_3, w_4}}}$ & $\cdots$ &  &  &  &  &  &  &  \\
 $P_{f_2}$ & ${w_2}^\star$ & $w_8$ & \Circled{ $\boxed{\overline{\underline{w_1,w_5}}}$ } & $\cdots$ &  &  &  &  &  &  \\
 $P_{f_3}$ & ${w_3}^\star$ & \Circled{ $\boxed{\overline{w_9}}$ } & $\underline{w_1,w_6}$ & $\cdots$ &  &  &  &  &  &  \\
 $P_{f_4}$ & ${w_4}^\star$ & \Circled{ $\boxed{\underline{w_{10}}}$ } & $\overline{w_1,w_7}$ & $\cdots$ &  &  &  &  &  &  \\
 $P_{f_5}$ & ${w_5}^\star$ & \Circled{ $\boxed{\overline{\underline{w_8}}}$ } & $\cdots$ &  &  &  &  &  &  &  \\
 $P_{f_6}$ & \Circled{ $\boxed{\overline{w_6}^\star}$ } & $\underline{w_9}$ & $\cdots$ &  &  &  &  &  &  &  \\
 $P_{f_7}$ & \Circled{ $\boxed{\underline{w_7}^\star}$ } & $\overline{w_{10}}$ & $\cdots$ &  &  &  &  &  &  &  \\

\hdashline

 $P_{w_1}$ & $f_2,f_3, f_4$ & $\underline{f_2,f_3}$ & $f_3,f_4$ & $\overline{f_2,f_4}$ & \Circled{ $f_1,f_2$ } & $f_1,f_3$ & $f_1,f_4$ & ${f_1}^\star$ & $\boxed{f_2}$  & $\cdots$ \\
 $P_{w_2}$ & $\boxed{\overline{\underline{f_1}}}$   & ${f_2}^\star$ & \Circled{ $\emptyset$ } &  &  &  &  &  &  &  \\
 $P_{w_3}$ & $\boxed{\overline{\underline{f_1}}}$ & ${f_3}^\star$ & \Circled{ $\emptyset$ } &  &  &  &  &  &  &  \\
 $P_{w_4}$ & $\boxed{\overline{\underline{f_1}}}$ & ${f_4}^\star$ & \Circled{ $\emptyset$ } &  &  &  &  &  &  &  \\
 $P_{w_5}$ & \Circled{ $\boxed{\overline{\underline{f_2}}}$ } & ${f_5}^\star$ & $\cdots$ &  &  &  &  &  &  &  \\
 $P_{w_6}$ & $\underline{f_3}$ & \Circled{ $\boxed{\overline{f_6}^\star}$ } & $\cdots$ &  &  &  &  &  &  &  \\
 $P_{w_7}$ & $\overline{f_4}$ & \Circled{ $\boxed{\underline{f_7}^\star}$ } & $\cdots$ &  &  &  &  &  &  &  \\
 $P_{w_8}$ & \Circled{ $\boxed{\overline{\underline{f_5}}}$ } & $f_2$ & $\emptyset^\star$ &  &  &  &  &  &  &  \\
 $P_{w_9}$ & $\underline{f_6}$ & \Circled{ $\boxed{\overline{f_3}}$ } & $\emptyset^\star$  &  &  &  &  &  &  &  \\
 $P_{w_{10}}$ & $\overline{f_7}$ & \Circled{ $\boxed{\underline{f_4}}$ } & $\emptyset^\star$ &  &  &  &  &  &  &  \\
 \hline
\end{tabular}
\caption{Preference profile for Example \ref{example Blair}.}
\label{tabla ejemplo Blair}
\end{table}

\begin{example}\label{example Blair}
Let $F=\{f_1, \ldots, f_7\}$ and $W=\{w_1, \ldots, w_{10}\}$. Consider the path-independent many-to-many market induced by the preferences in Table \ref{tabla ejemplo Blair}.  It is readily seen that matchings $$\underline{\mu}=\begin{pmatrix}
    f_1 & f_2 & f_3 & f_4 & f_5 & f_6 & f_7\\
    \{w_2,w_3,w_4\} & \{w_1,w_5\} & \{w_1,w_6\} & \{w_{10}\} & \{w_8\} & \{w_9\} & \{w_7\} \\
\end{pmatrix}$$ and 
$$\overline{\mu}=\begin{pmatrix}
    f_1 & f_2 & f_3 & f_4 & f_5 & f_6 & f_7\\
    \{w_2,w_3,w_4\} & \{w_1,w_5\} & \{w_9\} & \{w_1, w_7\} & \{w_8\} & \{w_6\} & \{w_{10}\} \\
\end{pmatrix}$$ are stable. If we compute matching   $\lambda_{\underline{\mu},\overline{\mu}}$ 
we obtain
 $$\boxed{\mu}=\lambda_{\underline{\mu},\overline{\mu}}= \begin{pmatrix}
    f_1 & f_2 & f_3 & f_4 & f_5 & f_6 & f_7\\
    \{w_2,w_3,w_4\} & \{w_1,w_5\} & \{w_9\} & \{w_{10}\} & \{w_8\} & \{w_6\} & \{w_7\} \\
\end{pmatrix}.$$ Matching $\boxed{\mu}$ is worker-quasi-stable but not stable since, for instance, $(f_1,w_1)$ blocks $\boxed{\mu}$. Let us now use the Tarski operator to compute $\mathcal{T}^F\left[\boxed{\mu}\right]$. 
We get

$$\text{\Circled{ $\mu$ }}=\mathcal{T}^F\left[\boxed{\mu}\right]=\begin{pmatrix}
    f_1 & f_2 & f_3 & f_4 & f_5 & f_6 & f_7 & \emptyset\\
    \{w_1\} & \{w_1,w_5\} & \{w_9\} & \{w_{10}\} & \{w_8\} & \{w_6\} & \{w_7\} & \{w_2,w_3,w_4\} \\
\end{pmatrix}.$$ Again,  matching \Circled{ $\mu$ } is worker-quasi-stable but not stable since, for example, $(f_2,w_2)$ blocks \Circled{ $\mu$ }. A second application of the operator generates matching $\mathcal{T}^{F^{(2)}}\left[\boxed{\mu}\right]$: 

$$\mu^\star=\mathcal{T}^{F^{(2)}}\left[\boxed{\mu}\right]=\begin{pmatrix}
    f_1 & f_2 & f_3 & f_4 & f_5 & f_6 & f_7 & \emptyset\\
    \{w_1\} & \{w_2\} & \{w_3\} & \{w_4\} & \{w_5\} & \{w_6\} & \{w_7\} & \{w_8,w_9,w_{10}\} \\
\end{pmatrix}.$$
Matching $\mu^\star$ is stable. In fact, it is the firm-optimal matching. Therefore, $\mu^\star= \underline{\mu}  \curlyvee_F  \overline{\mu}$. It is clear that Blair had in mind the previously presented re-equilibration process to achieved the join between $\underline{\mu}$ and $\overline{\mu}$. In his explanation of this example, he says: ``the join between $\underline{\mu}$ and $\overline{\mu}$ cannot have worker $w_1$ hired by firms $f_3$ or $f_4$. Thus, worker $w_1$ will want to work for firm $f_1$. This means that firm $f_1$ will not want workers $w_2, w_3$, and $w_4$, who will want to work for firms $f_2, f_3,$ and $f_4$, respectively. Since workers $w_5, w_6,$ and $w_7$ have no alternative employment, they will work for $f_5, f_6,$ and $f_7$, respectively. Thus, the join has every firm get its first choice''. 
\hfill $\Diamond$
\end{example}

\section{Conclusions}\label{sect conclusions}

In this paper, we compute the lattice operations for the stable set 
when only path-indepen\-dence on agents' choice functions is imposed. 
A few last remarks are in order.


If besides being path-independent, choice functions are assumed to satisfy the ``law of aggregate demand'', that says that when a firm chooses from an expanded set it hires at least as many workers as before,\footnote{\textbf{Law of aggregate demand:} $S'\subseteq S\subseteq W$ implies $|C_f(S')|\leq |C_f(S)|$ \citep[see][]{alkan2002class,hatfield2005matching}. \cite{alkan2002class} calls this property ``cardinal monotonicity''. } then $\lambda_{\mu,\mu'}$ and $\gamma_{\mu,\mu'}$ are actually stable matchings. Moreover, the same additional properties obtained in \cite{bonifacio2022lattice} for many-to-one quasi-stable matchings  extend to the present many-to-many framework. In particular, (i) the fixed point reached by iterating the Tarski operator from a quasi-stable matching coincides with the join of that matching and the corresponding side-optimal stable matching; (ii) joins between quasi-stable and stable matchings preserve stability; (iii) quasi-stable matchings dominating the side-optimal stable matching are themselves stable; and (iv) the size of agents’ assigned sets is bounded by their size in stable matchings. Analogous results had previously been obtained by \cite{wu2018lattice}  in the more restrictive setting of many-to-one matching with responsive preferences.

It is also worth mentioning that, under the law of aggregate demand, \cite{alkan2002class} derives the lattice operations and shows that the lattice of stable matchings is distributive, a property that no longer holds under path-independence alone \citep[see also][for a different proof]{li2014new}. Under the same additional assumption, \cite{manasero2021binary} obtains the lattice operations by means of a morphism connecting many-to-many and many-to-one matching models, lifting the corresponding operations through this bridge.\footnote{The lattice operations can also be obtained in a more straightforward fashion if, in addition to path-independence, choice functions are assumed to be ``quota-filling'' \citep{alkan2001preferences}.}



In a setting of many-to-many matchings with contracts with path-independent choice functions on one side and responsive preferences on the other, \cite{bonifacio2024lattice} show the lattice structure of the related set of ``envy-free'' matchings. \cite{bonifacio2024lattice} also study re-equilibration by means of a Tarski operator. Firm-quasi-stability generalizes their notion of envy-freeness. Disregarding contracts, our operator in Definition \ref{def operator worker many-to-many} generalizes the one presented in \cite{bonifacio2024lattice}. In fact, our results can easily be extended to a matching with contracts environment. 

In order to prove that the set of  many-to-many (worker or firm) quasi-stable matchings are lattices, we follow \cite{wu2018lattice} and \cite{bonifacio2022lattice}: we show that such sets are join-semilattices with a minimum. However, how to directly compute the meet between any pair of quasi-stable matching remains an open question, even invoking additional restrictions such as the law of aggregate demand.

\bibliographystyle{ecta}
\bibliography{bibliolattice}

\appendix

\section{Proofs}\label{appendix general}

\subsection{Proof of Theorem \ref{prop facts worker-quasi many}}\label{proof facts worker-quasi many}

First, we present a useful result about the behavior of set $F_w^\mu$. 

\begin{lemma}\label{lema Blair} (Facts about $F_w^\mu$)
    \begin{enumerate}[(i)]
        \item Let $\mu \in \mathcal{M}$ be such that $C_f(\mu(f))=\mu(f)$ for each $f \in F$.  
        Then, $\mu(w) \subseteq F_w^\mu$ for each $w \in W$.
        \item Let $\mu, \mu' \in \mathcal{M}$ and $S_f,S'_f \subseteq W$ be such that $S_f \subseteq S_f'$, 
        $\mu(f)=C_f(S_f)$, and $\mu'(f)=C_f(S_f')$ for each $f \in F$. Then, $F_w^{\mu'} \subseteq F_w^\mu$ for each $w \in W.$  
    \end{enumerate}
\end{lemma}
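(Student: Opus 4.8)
The plan is to prove the two facts about $\widetilde{F}_w^\mu = \{f \in F : w \in C_f(\mu(f) \cup \{w\})\}$ directly from substitutability and the defining properties of $\mu$.

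For part (i), I would fix $w \in W$ and take any $f \in \mu(w)$; by the symmetry condition (iii) in Definition \ref{def matching}, this is equivalent to $w \in \mu(f)$. I must show $w \in C_f(\mu(f) \cup \{w\})$. Since $w \in \mu(f)$, we have $\mu(f) \cup \{w\} = \mu(f)$, so by hypothesis $C_f(\mu(f) \cup \{w\}) = C_f(\mu(f)) = \mu(f)$, which contains $w$. Hence $f \in \widetilde{F}_w^\mu$, giving $\mu(w) \subseteq \widetilde{F}_w^\mu$. This part is essentially immediate once the set equality $\mu(f)\cup\{w\}=\mu(f)$ is observed.

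For part (ii), I would fix $w \in W$ and take $f \in \widetilde{F}_w^{\mu'}$, i.e. $w \in C_f(\mu'(f) \cup \{w\})$, and aim to show $w \in C_f(\mu(f) \cup \{w\})$. The idea is to use path-independence (equation \eqref{propiedad de choice}) together with $S_f \subseteq S_f'$ to relate $C_f$ evaluated on the two matched sets. Writing $\mu'(f) = C_f(S_f')$ and using that $S_f \subseteq S_f'$, one expresses $C_f(S_f' \cup \{w\})$ via path-independence as $C_f(C_f(S_f') \cup \{w\}) = C_f(\mu'(f) \cup \{w\})$, and likewise one wants to connect $C_f(\mu(f)\cup\{w\})$ to a choice over a larger ground set. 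The cleanest route is: from $w \in C_f(\mu'(f) \cup \{w\})$ and path-independence, deduce $w \in C_f(S_f' \cup \{w\})$; since $S_f \cup \{w\} \subseteq S_f' \cup \{w\}$ and $w \in (S_f \cup \{w\})$, substitutability (in the per-element form given in the footnote, or equivalently $C_f(S_f'\cup\{w\}) \cap (S_f\cup\{w\}) \subseteq C_f(S_f\cup\{w\})$) yields $w \in C_f(S_f \cup \{w\})$; finally path-independence again gives $C_f(S_f \cup \{w\}) = C_f(C_f(S_f) \cup \{w\}) = C_f(\mu(f) \cup \{w\})$, so $w \in C_f(\mu(f)\cup\{w\})$ and $f \in \widetilde{F}_w^\mu$.

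The main obstacle is bookkeeping the two applications of path-independence correctly: one needs to move from the choice over the ``small'' matched set $\mu(f) = C_f(S_f)$ back up to a choice over the full ground set $S_f$, and similarly for $\mu'(f)$ and $S_f'$, so that substitutability can be applied between the comparable ground sets $S_f \cup \{w\}$ and $S_f' \cup \{w\}$ rather than between the (not necessarily nested) sets $\mu(f)$ and $\mu'(f)$. Care is required because $\mu(f)$ and $\mu'(f)$ themselves need not satisfy $\mu(f) \subseteq \mu'(f)$ even though $S_f \subseteq S_f'$; the whole point of passing through the ground sets $S_f, S_f'$ is to recover a nested inclusion to which substitutability applies. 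Once that reduction is in place, each step is a direct invocation of path-independence \eqref{propiedad de choice} or the substitutability definition, so no further difficulty is expected.
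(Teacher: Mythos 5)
Your proposal is correct and follows essentially the same route as the paper's proof: part (i) is the same immediate observation that $w\in\mu(f)$ forces $\mu(f)\cup\{w\}=\mu(f)$, and part (ii) uses path-independence to lift $C_f(\mu'(f)\cup\{w\})$ up to $C_f(S_f'\cup\{w\})$, applies substitutability between the nested ground sets $S_f\cup\{w\}\subseteq S_f'\cup\{w\}$, and then uses path-independence again to come back down to $C_f(\mu(f)\cup\{w\})$. Your remark about why one must pass through the ground sets $S_f, S_f'$ rather than compare $\mu(f)$ and $\mu'(f)$ directly is exactly the point of the paper's argument.
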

\begin{proof}
    (i) Let $\mu \in \mathcal{M}$ be such that $C_f(\mu(f))=\mu(f)$ for each $f \in F$, and let $w \in W$. If $f \in \mu(w)$, then $w \in \mu(f)$. Thus, $\mu(f)=\mu(f)\cup \{w\}$ and $w \in \mu(f)=C_f(\mu(f) \cup \{w\})$, so $f \in F_w^\mu$.

    \vspace{5 pt}

    \noindent (ii) Let $\mu, \mu' \in \mathcal{M}$ and $S_f,S'_f \subseteq W$ fo each $f \in F$ be as stated in the Lemma, and let $w \in W$. If $f \in F_w^{\mu'}$, by \eqref{propiedad de choice} we have $$w \in C_f(\mu'(f)\cup \{w\})=C_f\left(C_f(S_f')\cup \{w\}\right)=C_f(S'_f \cup \{w\}).$$ Since $S_f \subseteq S_f'$, by substitutability and  \eqref{propiedad de choice} we obtain  $$w \in C_f(S_f \cup \{w\})=C_f\left(C_f(S_f) \cup \{w\}\right)=C_f(\mu(f) \cup \{w\}).$$ Hence, $f \in F_w^\mu.$ 
\end{proof}

\begin{remark} \em Part (i) of Lemma \ref{lema Blair} is a restatement of Proposition 4.7 in \cite{blair1988lattice}, whereas part (ii) is a restatement of Proposition 4.9 therein. We include their proofs for completeness.  \hfill $\triangle$
\end{remark}

\noindent \emph{Proof of Theorem \ref{prop facts worker-quasi many}.} Let $\mu, \mu' \in \mathcal{Q}^W$. For short, let $\lambda \equiv \lambda_{\mu,\mu'}$.

\smallskip
    \noindent (i)  First, we show that $\lambda$ is individually rational. Let $f \in F$. Then, by \eqref{propiedad de choice}, $$C_f\left(\lambda(f)\right)=C_f\left(C_f\left(\mu (f)\cup \mu'(f)\right)\right)=C_f\left(\mu (f)\cup \mu'(f)\right)=\lambda(f),$$ and $C_f\left(\lambda(f)\right)=\lambda(f).$

\medskip

\noindent \textbf{Claim: $\boldsymbol{\lambda(w)\subseteq C_w(F_w^{\lambda}) \text{ for each }w \in W}$.} Let $w \in W$. Define, for each $f \in F$,  $S_f=\mu(f)$ and $S_f'=\mu(f) \cup \mu'(f)$. Then, $S_f \subseteq S_f'$, $\mu(f)=C_f(S_f)$ and $\lambda(f)=C_f(S_f').$ Therefore, by Lemma \ref{lema Blair} (ii), it follows that $F_w^{\lambda} \subseteq F_w^\mu$. This last fact, together with $\mu \in \mathcal{Q}^W$, $\mu(w) \subseteq F_w^\mu$ (by Lemma \ref{lema Blair} (i)), and substitutability, imply $$\mu(w) \cap F_w^{\lambda} \subseteq C_w(F_w^\mu) \cap F_w^{\lambda} \subseteq C_w(F_w^{\lambda}),$$ so $\mu(w) \cap F_w^{\lambda}  \subseteq C_w(F_w^{\lambda})$ and, since by Lemma \ref{lema Blair} (i) we have $\lambda(w) \subseteq F_w^{\lambda},$ it follows that 
\begin{equation*}
    \mu(w) \cap \lambda(w)  \subseteq C_w(F_w^{\lambda}).
\end{equation*} 
In an analogous way we can prove that $\mu'(w) \cap \lambda(w)  \subseteq C_w(F_w^{\lambda}).$ Finally, since $\lambda(w)\subseteq \mu(w) \cup \mu'(w)$, we get $\lambda(w) \subseteq C_w(F_w^{\lambda}).$ This proves the claim. 

\medskip

Now, let $w \in W$. By the Claim, $\lambda(w) \subseteq C_w(F_w^{\lambda}).$ Then, since $\lambda(w) \subseteq F_w^{\lambda}$ by Lemma \ref{lema Blair} (i), substitutability implies  $\lambda(w) \subseteq C_w(\lambda).$ Moreover, as $C_w(\lambda) \subseteq \lambda(w)$, we get $C_w(\lambda(w))=\lambda(w).$ Hence, $\lambda$ is individually rational. 

To finish the proof that $\lambda$ is worker-quasi-stable, let $w \in W$ and $T \subseteq F_w^\mu$. By the Claim,  $\lambda(w) \subseteq C_w(F_w^{\lambda}).$ Then, since $\lambda(w) \subseteq F_w^{\lambda}$ by Lemma \ref{lema Blair} (i), substitutability implies $\lambda(w) \subseteq C_w(\lambda(w) \cup T).$

To see that $\lambda$ is the join of $\mu$ and $\mu'$ with respect to $\succeq_F$ in $\mathcal{Q}^W$, first notice that, by \eqref{propiedad de choice}, $$C_f({\lambda}(f)\cup \mu(f))=C_f(C_f(\mu(f)\cup \mu'(f))\cup \mu(f))=C_f(\mu(f)\cup \mu'(f))=\lambda(f)$$ for each $f \in F$, so $\lambda \succeq_F \mu$. Similarly, $\lambda\succeq_F \mu'$, and thus $\lambda$ is an upper bound of $\mu$ and $\mu'$. Let $\nu \in \mathcal{Q}^W$ be another upper bound of $\mu$ and $\mu'$. Then, by \eqref{propiedad de choice},  $$\nu(f)=C_f(\nu(f)\cup \mu(f))=C_f(C_f(\nu(f)\cup \mu'(f))\cup \mu(f))=C_f(\nu(f)\cup \mu(f)\cup \mu'(f))=$$ $$=C_f(\nu(f)\cup C_f(\mu(f)\cup \mu'(f)))=C_f(\nu(f)\cup \lambda(f))$$ for each $f \in F$, so $\nu \succeq_F \lambda.$

\smallskip

\noindent (ii) Consider the empty matching $\mu_\emptyset$ where every agent is unmatched. Clearly, $\mu_\emptyset \in \mathcal{Q}^W$. Moreover, $\mu \succeq_F \mu_\emptyset$ for each $\mu \in \mathcal{Q}^W$, so $(\mathcal{Q}^W,\succeq_F)$ has a minimum.  By Theorem \ref{prop facts worker-quasi many} (i), $(\mathcal{Q}^W,\succeq_F)$ is a join-semilattice. Any join-semilattice with a minimum is a lattice \citep[see, for example,][]{stanley2011enumerative}. \hfill $\square$

\subsection{Proof of Theorem \ref{results operator worker many-to-many}}\label{proof results worker many-to-many}

\noindent (i) Let $\mu \in \mathcal{Q}^W$ and let $f \in F.$ Then, $$C_f(\mathcal{T}^F[\mu](f))=C_f(C_f(B_f^\mu ))=C_f(B_f^\mu)=\mathcal{T}^F[\mu](f).$$
    Let $w \in W$. Using a similar reasoning as the one applied to matching $\lambda$ to obtain the Claim in part (i) of the proof of Theorem \ref{prop facts worker-quasi many}, but this time to matching $\mathcal{T}^F[\mu]$, it follows that
    \begin{equation}\label{Blair 1}
        \mathcal{T}^F[\mu](w) \subseteq C_w\left(F_w^{\mathcal{T}^F[\mu]}\right).
    \end{equation}
    Since, by Lemma \ref{lema Blair} (i),  $\mathcal{T}^F[\mu](w) \subseteq F_w^{\mathcal{T}^F[\mu]}$,  substitutability and  \eqref{Blair 1} imply $\mathcal{T}^F[\mu](w) \subseteq C_w\left(\mathcal{T}^F[\mu](w)\right)$. As $C_w\left(\mathcal{T}^F[\mu](w)\right) \subseteq \mathcal{T}^F[\mu](w)$ as well, we have $C_w\left(\mathcal{T}^F[\mu](w)\right)=\mathcal{T}^F[\mu](w)$. This proves that $\mathcal{T}^F[\mu]$ is individually rational. Next, Let $T \subseteq F_w^{\mathcal{T}^F[\mu]}$. The fact that $\mathcal{T}^F[\mu](w) \subseteq F_w^{\mathcal{T}^F[\mu]}$, \eqref{Blair 1}, and substitutability, imply $\mathcal{T}^F[\mu](w) \subseteq C_w\left(\mathcal{T}^F[\mu](w) \cup T\right)$. Thus,  $\mathcal{T}^F[\mu] \in \mathcal{Q}^W$.

    Moreover, $\mu(f)\subseteq B_f^\mu $ and \eqref{propiedad de choice}  imply that  $$C_f(\mathcal{T}^F[\mu](f) \cup \mu(f))=C_f\left(C_f( B_f^\mu) \cup \mu(f)\right)=C_f(\mu(f)\cup B_f^\mu)=C_f(B_f^\mu)=\mathcal{T}^F[\mu](f).$$ As $f$ is arbitrary,  $\mathcal{T}^F[\mu] \succeq_F \mu.$

    \bigskip

    \noindent (ii) Let  $\mu, \mu' \in \mathcal{Q}^W$ be such that $\mu \succeq_F \mu'$. Assume $\mathcal{T}^F[\mu] \succeq_F \mathcal{T}^F[\mu']$ does not hold. Then, there is $f \in F$ such that 
    \begin{equation}\label{yes 1}
        \mathcal{T}^F[\mu](f) \neq C_{f}\left(\mathcal{T}^F[\mu](f) \cup \mathcal{T}^F[\mu'](f)\right). 
    \end{equation}
    By \eqref{propiedad de choice},  $C_f\left(\mathcal{T}^F[\mu](f) \cup \mathcal{T}^F[\mu'](f)\right)=C_f\left( C_f( B_f^\mu) \cup C_f( B_f^{\mu'})\right)=C_f\left( B_f^\mu\cup B_f^{\mu'}\right)$. Thus, \eqref{yes 1} becomes
    \begin{equation}\label{yes 2}
       C_f( B_f^\mu)\neq C_f\left( B_f^\mu\cup B_f^{\mu'} \right). 
    \end{equation}
    As substitutability implies $C_f\left( B_f^\mu\cup B_f^{\mu'} \right)\cap B_f^\mu \subseteq C_f(B_f^\mu)$, by \eqref{yes 2} it follows that there is $w \in B_f^{\mu'}\setminus B_f^\mu$ such that $w \in C_f\left( B_f^\mu\cup B_f^{\mu'} \right)$. Since $\mu(f) \subseteq  B_f^\mu$, substitutability implies then that $w \in C_f(\mu(f)\cup \{w\})$. Therefore, $f \in F_w^\mu$.
    
    \medskip
    
    \noindent \textbf{Claim: $\boldsymbol{F_w^\mu \subseteq F_w^{\mu'}}$.} For each $f' \in F$, let $S_{f'}'=\mu(f') \cup \mu'(f')$ and $S_{f'}=\mu'(f')$. Then, $S_{f'} \subseteq S_{f'}'$. Moreover, $\mu'(f')=C_{f'}(S_{f'})$ for each $f' \in F$ by the individual rationality of $\mu'$, and $\mu(f')=C_{f'}(S_{f'}')$ for each $f' \in F$ because $\mu \succeq_F \mu'$. Applying Lemma \ref{lema Blair} (ii) with $\mu'$ in the role of $\mu$ and viceversa, $F_w^\mu \subseteq F_w^{\mu'}$. This proves the Claim.
    
    \medskip
    
    \noindent Since $w \notin B_f^\mu$, $f \notin C_w(F_w^\mu)$. As $f \in F_w^\mu$, substitutability and the Claim imply that $f \notin C_w(F_w^{\mu'})$. This contradicts that $w \in B_f^{\mu'}$. Hence, $\mathcal{T}^F[\mu] \succeq_F \mathcal{T}^F[\mu']$, as desired.

\bigskip

\noindent (iii) 
\noindent ($\Longrightarrow$) Assume that $\mathcal{T}^F[\mu] =\mu$. Let $f\in F. $  To check that $\mu \in \mathcal{S}$ it is sufficient to see that $B_f^\mu=\mu(f)$. By \eqref{b tilde de f}, $\mu(f) \subseteq B_f^\mu$. Let $w \in B_f^\mu$. Then, $f \in C_w(F^\mu_w) \subseteq F_w^\mu$ and thus $w \in C_f(\mu(f)\cup \{w\}).$ Moreover, as $C_f(B_f^\mu)=\mu(f)$, by \eqref{propiedad de choice} we have  
\begin{equation*}\label{Blair 6}
 C_f(\mu(f)\cup \{w\})=C_f\left( C_f(B_f^\mu)\cup \{w\}\right)
=C_f\left(B_f^\mu\cup \{w\}\right)=C_f\left(B_f^\mu \right)=\mu(f).
\end{equation*}
Hence, $w\in \mu(f)$ and $B_f^\mu \subseteq \mu(f)$. Therefore, $B_f^\mu=\mu(f)$ and $\mu \in \mathcal{S}$.

\noindent ($\Longleftarrow$) Assume $\mathcal{T}^F[\mu] \neq \mu$. Then, there is $f \in F$ such that $C_f(B_f^\mu) \neq \mu(f)$. Therefore, there is $w \in C_f(B_f^\mu)$ with $w \notin \mu(f)$. Since $\mu(f) \subseteq B_f^\mu$, by substitutability, \begin{equation}\label{Blair 4}
    w \in C_f(\mu(f)\cup \{w\}).
\end{equation}
Moreover, since $w \in B_f^\mu$, $f \in C_w(F_w^\mu)$. Since $\mu(w) \subseteq F_w^\mu$ by Lemma \ref{lema Blair} (i),  substitutability implies \begin{equation}\label{Blair 5}
    f \in C_w(\mu(w) \cup \{f\}).
\end{equation}
Therefore, as $w \notin \mu(f)$, by \eqref{Blair 4} and \eqref{Blair 5} it follows that $(f,w)$ blocks $\mu$. Hence, $\mu \notin \mathcal{S}$.  \hfill $\square$

\subsection{Proof of Theorem \ref{theorem join and meet many-to-many}}\label{proof of theorem join and meet many-to-many}

\noindent Let $\mu,\mu' \in \mathcal{S}$. First, let us see (i). For short, let $\lambda\equiv \lambda_{\mu, \mu'}$.  By Theorem \ref{prop facts worker-quasi many} (i), $\lambda \in \mathcal{Q}^W$  and, furthermore, $\lambda$ is the join of $\mu$ and $\mu'$ with respect to $\succeq_F$ in $\mathcal{Q}^W$. Define $$\Lambda=\{\nu \in \mathcal{Q}^W : \nu \succeq_F \lambda\}.$$

\medskip

    \noindent \textbf{Claim: $\boldsymbol{\mathcal{T}^F[\Lambda]\subseteq \Lambda}$}. Let $\nu \in \Lambda$. Then, $\nu \succeq_F \lambda\succeq_F \mu$ and $\nu \succeq_F \lambda \succeq_F \mu'$. By Theorem  \ref{results operator worker many-to-many} (ii) and (iii), $\mathcal{T}^F(\nu) \succeq_F \mathcal{T}^F(\mu)=\mu$ and $\mathcal{T}^F(\nu) \succeq_F \mathcal{T}^F(\mu')=\mu'$. This implies that   $\mathcal{T}^F(\nu)$ is an upper bound of $\mu$ and $\mu'$ in $\mathcal{Q}^W$. By definition of join, $\mathcal{T}^F(\nu)\succeq_F \lambda$ and thus $\mathcal{T}^F(\nu) \in \Lambda$. This proves the claim. 

\medskip
    
    Next, consider the restriction of $\mathcal{T}^F$ to $\Lambda$. By the previous claim, $\mathcal{T}^F|_{\Lambda}: \Lambda \longrightarrow \Lambda$.  
    Since $\mathcal{T}^F$ is isotone by Theorem \ref{results operator worker many-to-many} (ii) and $\lambda$ is the $\succeq_F$-smallest element in $\Lambda$, $\mathcal{F}^F(\lambda)$ is the $\succeq_F$-smallest fixed point larger than $\mu$ and $\mu'$. Otherwise, if $\delta$ is an upper bound of $\mu$ and $\mu'$, $\mathcal{F}^F(\lambda) \succ_F \delta$, and $\delta$ is a fixed point of $\mathcal{T}^F$; as $\delta \succeq_F \lambda$ isotonicity implies $\delta \succeq_F \mathcal{F}^F(\lambda)$, a contradiction. Finally, since $\mathcal{F}^F(\lambda) \in \mathcal{S}$ by Theorem  \ref{results operator worker many-to-many} (iii), we get $\mathcal{F}^F(\lambda)= \mu \ \curlyvee_F \mu'$, as desired. 

    To see (ii), first apply \emph{mutatis mutandis} the same reasoning as before, but this time to operator $\mathcal{T}^W$ using results dual to those of in Theorem \ref{results operator worker many-to-many}, to obtain $ \mu  \curlyvee_W 
  \mu'=\mathcal{F}^W(\gamma_{\mu, \mu'})$. Next, by Theorem 4.5 in \cite{blair1988lattice}, $\succeq_F$ and $\succeq_W$ are dual orders in $\mathcal{S}$. Therefore, $\mu  \curlywedge_F 
  \mu'=\mu  \curlyvee_W 
  \mu'$ and the result follows. \hfill $\square$

\end{document}